\newcommand{\doublewidetilde}[1]{{%
		\mathpalette\double@widetilde{#1}%
	}}
	\def\BState{\State\hskip-\ALG@thistlm}
	\newtheorem{theorem}{Theorem}
	\newtheorem{lemma}{Lemma}
	\newcolumntype{C}[1]{>{\Centering}m{#1}}
\begin{document}
		%
		\title{Distributed Power Control in Downlink Cellular Massive MIMO Systems}

		\author{\IEEEauthorblockN{Trinh Van Chien\IEEEauthorrefmark{1}, Emil Bj\"{o}rnson\IEEEauthorrefmark{1}, Erik G. Larsson\IEEEauthorrefmark{1}, and Tuan Anh Le\IEEEauthorrefmark{2}}
			\IEEEauthorblockA{\IEEEauthorrefmark{1}Department of Electrical
				Engineering (ISY), Link\"{o}ping University, SE-581 83 Link\"{o}ping, Sweden}
			\IEEEauthorblockA{\IEEEauthorrefmark{2}Department of Design Engineering \& Maths, Middlesex University London, United Kingdom}
			\IEEEauthorblockA{\IEEEauthorrefmark{0}Email:	\{trinh.van.chien, emil.bjornson, erik.g.larsson\}@liu.se, t.le@mdx.ac.uk}
	
			\thanks{This paper was supported by the European Union's Horizon 2020 research and innovation programme under grant agreement No 641985 (5Gwireless). It was also supported by ELLIIT and CENIIT.}}
		
		\maketitle
		
		\begin{abstract}
			This paper compares centralized and distributed methods to solve the power minimization problem with quality-of-service (QoS) constraints in the downlink (DL) of multi-cell Massive multiple-input multiple-output (MIMO) systems. In particular, we study the computational complexity, number of parameters that need to be exchanged between base stations (BSs), and the convergence of iterative implementations. Although a distributed implementation based on dual decomposition (which only requires statistical channel knowledge at each BS) typically converges to the global optimum after a few iterations, many parameters need to be exchanged to reach convergence.
			\end{abstract}

		
		\section{Introduction}
		Massive MIMO is considered a key technology for $5$G networks due to its great improvements in both spectral efficiency (SE) and energy efficiency (EE) over legacy networks \cite{Marzetta2016a}. Moreover, resource allocation problems in Massive MIMO are reported to have much lower complexity than that in small-scale systems owning to the fact that the ergodic SE expressions only depend on the large-scale fading coefficients, thanks to the so-called channel hardening property \cite{Chien2016b}. However, so far, most optimization problems in the Massive MIMO literature have been formulated and solved in a centralized fashion \cite{Chien2016b,Ngo2015a}. This requires the network to gather full statistical channel state information (CSI) from all base stations (BSs) at one location in order to centrally allocate resources in every cell and then inform each BS of the decisions. This raises practical questions, especially for dense networks with many BSs and users, about backhaul signaling, scalability, and delays \cite{Bjornson2013d}. A classic approach to deal with such issues is distributed optimization \cite{Palomar2006a}, which transforms the centralized problem implementations where every BS simultaneously optimizes its local resources based on local information and only parameters that describe the inter-cell interference are exchanged between BSs to iteratively find the globally optimal solution.
		
		A few recent works have applied distributed implementation concepts to Massive MIMO, e.g., \cite{ Adhikary2017, zappone2015,Asgharimoghaddam2014}. For the uplink (UL) transmission, a distributed max-min fairness problem for a two-decoding-layers Massive MIMO system is studied in \cite{Adhikary2017} based on the existence of an effective interference function obeying the rigid conditions \cite{Yates1995a, le2013}. By formulating a non-cooperative game, \cite{zappone2015} proposed a distributed EE problem which has a Nash equilibrium. For the downlink (DL) transmission, by utilizing the UL-DL duality, \cite{Asgharimoghaddam2014} proposed a distributed framework for a total transmit power minimization problem with quality-of-service (QoS) constraints that is also seeking the optimal precoding vectors, which are functions of the small-scale fading realizations. Even though the optimal precoding vectors provides certain gains over heuristic precoding such as zero-forcing (ZF), the algorithm in \cite{Asgharimoghaddam2014} suffers from the fact that the small-scale fading realizations vary rapidly over both time and frequency. Furthermore, the previous algorithms assumed that the BSs have full access to the channel statistics of the neighboring cells \cite{Adhikary2017} or all the other cells \cite{zappone2015,Asgharimoghaddam2014}, which may require heavy backhaul signaling in the backhaul since users move, new users arrive, and current users disconnect. To the best of our knowledge, no previous work has explicitly investigated what information should be exchanged between BSs in Massive MIMO to solve power minimization problems.
		
		In this paper, we consider the DL transmission of Massive MIMO systems with maximum ratio (MR) or ZF precoding. Each user has a QoS requirement, in terms of an achievable SE, and we study the total transmit power minimization problem. We compare a centralized solution algorithm with two distributed algorithms to answer the following fundamental questions: i) Can a distributed implementation achieve the optimal solution of the centralized problem? ii) Which subset of parameters should be shared between the BSs? iii) How can we limit complexity and backhaul signaling requirements for distributed implementation?

		\textit{Notations}: Upper/lower bold letters are used for matrices/vectors. $(\cdot)^H$ denotes the Hermitian transpose. $\mathbb{E}\{ \cdot\}$ denotes the expectation of a random variable, while $\mathcal{CN}(\cdot, \cdot)$ stands for the circularly symmetric Gaussian distribution and $\| \cdot\|$ is the Euclidean norm.
		\section{System Model} \label{Section: System Model}
		We consider a Massive MIMO system comprising $L$ cells, each having a BS equipped with $M$ antennas and serving $K$ single-antenna users. The system operates according to a time division duplex (TDD) protocol. The time-frequency resources are divided into coherence intervals of $\tau_c$ symbols where the channels are assumed to be static and frequency flat. The channel between user $k$ in cell $i$ and BS $l$ is assumed to be uncorrelated Rayleigh fading,
		\begin{equation}
		\mathbf{h}_{i,k}^l \sim \mathcal{CN}(\mathbf{0}, \beta_{i,k}^l \mathbf{I}_M),
		\end{equation}
		where $\beta_{i,k}^l$ is the large-scale fading coefficient.
		
		During the UL channel estimation phase, each coherence interval dedicates $\tau_p$ symbols for pilot transmission. We assume that the same set of $\tau_p = K$ orthonormal pilot signals $\{ \pmb{\psi}_1,\ldots,\pmb{\psi}_K \}$, with $\|\pmb{\psi}_k\|^2 = \tau_p, \forall k$, is reused in each cell wherein user~$k$ in cell $l$ allocates the power $\hat{p}_{l,k} > 0$ to its pilot signal. The received training signal at BS~$l$ is
		\begin{equation} \label{eq:PilotTraining1}
		\mathbf{Y}_l = \sum_{i=1}^L \sum_{t=1}^K \sqrt{\hat{p}_{i,t}} \mathbf{h}_{i,t}^l \pmb{\psi}_{t}^H + \mathbf{N}_l,
		\end{equation}
		where $\mathbf{N}_l$ is additive noise with each element independently distributed as $\mathcal{CN}(0, \sigma_{\mathrm{UL}}^2)$, where $\sigma_{\mathrm{UL}}^2$ is the variance. The channel between user~$k$ in cell~$l$ and BS~$l$ is estimated by multiplying the received signal in \eqref{eq:PilotTraining1} with $\pmb{\psi}_k$ as
		\begin{equation}
		\mathbf{y}_{l,k} = \mathbf{Y}_l \pmb{\psi}_k = \sum_{i=1}^L \sum_{t=1}^K \sqrt{\hat{p}_{i,t}} \mathbf{h}_{i,t}^l \pmb{\psi}_{t}^H \pmb{\psi}_{k} + \mathbf{N}_l  \pmb{\psi}_{k}.
		\end{equation}
	    Using minimum mean square error (MMSE) estimation \cite{Kay1993a, Chien2018a}, the channel estimate is distributed as
		\begin{equation}
		 \hat{\mathbf{h}}_{i,k}^l \sim \mathcal{CN}(\mathbf{0}, \gamma_{i,k}^l \mathbf{I}_M),
		\end{equation}
		where the variance $\gamma_{i,k}^l$ is
		\begin{equation}
		\gamma_{i,k}^l =  \frac{\hat{p}_{i,k} \tau_p (\beta_{i,k}^l)^2}{\sum\limits_{i' =1}^L \hat{p}_{i',k} \tau_p \beta_{i',k}^l + \sigma_{\mathrm{UL}}^2}.
		\end{equation}
		In this paper, the channel estimates are used to construct linear precoding vectors for the DL data transmission.
		\subsection{Downlink Data Transmission}
		In the DL data transmission, BS $l$ transmits a Gaussian signal $s_{l,k}$ to its user $k$ with $\mathbb{E} \{ |s_{l,k} |^2 \} =1$. The received baseband signal at user $k$ in cell $l$ is
		\begin{equation}
		y_{l,k} = \sum_{i=1}^L \sum_{t=1}^K \sqrt{\rho_{i,t}} (\mathbf{h}_{l,k}^i)^H \mathbf{w}_{i,t} s_{i,t} + n_{l,k},
		\end{equation}
		where the additive noise is $n_{l,k} \sim \mathcal{CN}(0,\sigma_{\mathrm{DL}}^2)$,  $\rho_{i,t}$ is the power allocated to user $t$ in cell $i$ for transmission of the data symbol $s_{i,t}$ and $\mathbf{w}_{i,t}$ is the corresponding normalized linear precoding vector:
		\begin{equation} \label{eq: Linear-Precoding-Vector}
		\mathbf{w}_{l,k}  = \begin{cases}  \frac{ \hat{\mathbf{h}}_{l,k}^l }{ \sqrt{ \mathbb{E} \{  \| \hat{ \mathbf{h}}_{l,k}^l  \|^{2} \}}}, & \mbox{for MR,} \\
		\frac{ \widehat{\mathbf{H}}_{l}^l \mathbf{r}_{l,k} }{\sqrt{ \mathbb{E} \{ \|\widehat{\mathbf{H}}_{l}^l \mathbf{r}_{l,k}\|^{2} \} }}, & \mbox{for ZF,} \end{cases}
		\end{equation}
		where $\widehat{\mathbf{H}}_{l}^l = [\hat{ \mathbf{h}}_{l,1}^l, \ldots, \hat{ \mathbf{h}}_{l,K}^l ] \in \mathbb{C}^{M \times K}$ is the estimated channel matrix of the $K$ users in cell $l$, $\mathbf{r}_{l,k}$ is the $k$th column of $(\widehat{\mathbf{H}}_{l}^{l,H} \widehat{\mathbf{H}}_{l}^{l})^{-1}$.
		 Using standard techniques, a closed-form lower bound on the DL ergodic capacity with MR or ZF precoding is obtained.
		\begin{lemma}\cite[Corollary~$3$]{Chien2017a} \label{Lemma1}
			In the DL, the closed-form expression for the ergodic SE of user $k$ in cell $l$ is
			\begin{equation} \label{eq:DLRate}
			R_{l,k} = \left(1 - \frac{\tau_p}{\tau_c} \right) \log_2 \left( 1 + \mathrm{SINR}_{l,k}\right),
			\end{equation}
			where the effective  signal-to-interference-plus-noise ratio (SINR), denoted by $\mathrm{SINR}_{l,k}$, is
			\begin{equation}
			\mathrm{SINR}_{l,k} = \frac{G \rho_{l,k} \gamma_{l,k}^l }{G \sum\limits_{\substack{i = 1 \\ i \neq l}}^L \rho_{i,k} \gamma_{l,k}^i  + \sum\limits_{i=1}^L \sum\limits_{t=1}^K \rho_{i,t} z_{l,k}^i + \sigma_{\mathrm{DL}}^2}.
			\end{equation}
			The parameters $G$ and $z_{l,k}^i$ are specified by the precoding scheme. MR precoding gives $G=M$ and $z_{l,k}^i = \beta_{l,k}^i$, while $ZF$ precoding gives $G=M-K$ and $z_{l,k}^i = \beta_{l,k}^i -\gamma_{l,k}^i$.
		\end{lemma}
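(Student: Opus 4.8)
The plan is to invoke the standard \emph{use-and-then-forget} (UatF) capacity bound: the receiver is assumed to know only the deterministic mean effective channel $\mathbb{E}\{(\mathbf{h}_{l,k}^l)^H \mathbf{w}_{l,k}\}$, and all deviations from it, together with the inter/intra-cell interference and thermal noise, are collected into a single effective noise term. Because the data symbols are independent, zero-mean and unit-power, and the noise is independent of everything, the variance of this effective noise is simply the sum of the individual second moments, and treating it as worst-case Gaussian yields a bound of exactly the stated form $R_{l,k}=(1-\tau_p/\tau_c)\log_2(1+\mathrm{SINR}_{l,k})$, where the pre-log is the fraction of the coherence block left for data and
\begin{equation}
\mathrm{SINR}_{l,k} = \frac{\rho_{l,k} | \mathbb{E}\{(\mathbf{h}_{l,k}^l)^H \mathbf{w}_{l,k}\} |^2}{\sum_{i=1}^L \sum_{t=1}^K \rho_{i,t} \mathbb{E}\{ |(\mathbf{h}_{l,k}^i)^H \mathbf{w}_{i,t}|^2 \} - \rho_{l,k} | \mathbb{E}\{(\mathbf{h}_{l,k}^l)^H \mathbf{w}_{l,k}\} |^2 + \sigma_{\mathrm{DL}}^2}.
\end{equation}
The remaining work is to evaluate one first moment and a family of second moments of the precoded channels, first for MR and then for ZF.

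For the easy terms I would use the orthogonality of MMSE estimation, writing $\mathbf{h}_{l,k}^i=\hat{\mathbf{h}}_{l,k}^i+\tilde{\mathbf{h}}_{l,k}^i$ with estimation error $\tilde{\mathbf{h}}_{l,k}^i\sim\mathcal{CN}(\mathbf{0},(\beta_{l,k}^i-\gamma_{l,k}^i)\mathbf{I}_M)$ independent of the estimate. For MR, $\mathbf{w}_{l,k}=\hat{\mathbf{h}}_{l,k}^l/\sqrt{M\gamma_{l,k}^l}$ gives the signal gain $\mathbb{E}\{(\mathbf{h}_{l,k}^l)^H\mathbf{w}_{l,k}\}=\mathbb{E}\{\|\hat{\mathbf{h}}_{l,k}^l\|^2\}/\sqrt{M\gamma_{l,k}^l}=\sqrt{M\gamma_{l,k}^l}$, i.e.\ numerator $G\rho_{l,k}\gamma_{l,k}^l$ with $G=M$. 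Whenever the pilot index $t\neq k$, the channel $\mathbf{h}_{l,k}^i$ is independent of $\mathbf{w}_{i,t}$, so $\mathbb{E}\{|(\mathbf{h}_{l,k}^i)^H\mathbf{w}_{i,t}|^2\}=\beta_{l,k}^i\,\mathbb{E}\{\|\mathbf{w}_{i,t}\|^2\}=\beta_{l,k}^i$ by the unit-norm normalization, and the independent error part of every other term contributes analogously.

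The crux is the coherent pilot-contamination terms $t=k$, $i\neq l$, where $\mathbf{w}_{i,k}$ and $\mathbf{h}_{l,k}^i$ are correlated through the shared pilot. The key fact is that co-pilot MMSE estimates are deterministic positive-scaled copies, $\hat{\mathbf{h}}_{l,k}^i=\sqrt{\gamma_{l,k}^i/\gamma_{i,k}^i}\,\hat{\mathbf{h}}_{i,k}^i$, so $(\hat{\mathbf{h}}_{l,k}^i)^H\mathbf{w}_{i,k}$ is proportional to $\|\hat{\mathbf{h}}_{i,k}^i\|^2$; using the fourth moment $\mathbb{E}\{\|\hat{\mathbf{h}}_{i,k}^i\|^4\}=M(M+1)(\gamma_{i,k}^i)^2$ and adding the error part yields $\mathbb{E}\{|(\mathbf{h}_{l,k}^i)^H\mathbf{w}_{i,k}|^2\}=M\gamma_{l,k}^i+\beta_{l,k}^i$. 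Applying the same identity to the $(l,k)$ term itself, its mean-removed part contributes $\rho_{l,k}\beta_{l,k}^l$ to the effective noise. Assembling everything, the $M\gamma_{l,k}^i$ pieces collect into $G\sum_{i\neq l}\rho_{i,k}\gamma_{l,k}^i$, while every remaining contribution has the form $\rho_{i,t}\beta_{l,k}^i$ and sums to $\sum_i\sum_t\rho_{i,t}\beta_{l,k}^i$, giving the MR expression with $z_{l,k}^i=\beta_{l,k}^i$.

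The ZF case follows the identical template with $\mathbf{w}_{l,k}\propto\widehat{\mathbf{H}}_l^l\mathbf{r}_{l,k}$, and here the main obstacle shifts to two ingredients. First, the interpolation identity $(\hat{\mathbf{h}}_{l,j}^l)^H\widehat{\mathbf{H}}_l^l\mathbf{r}_{l,t}=\delta_{jt}$ makes all estimate contributions deterministic Kronecker deltas; this both simplifies the coherent terms (they become exact constants rather than requiring a fourth moment) and, crucially, \emph{nulls} the estimated-channel part of every intra-cell interference direction, so that only the error variance $\beta_{l,k}^i-\gamma_{l,k}^i$ survives in the non-coherent terms. Second, the normalization requires the inverse-Wishart moment $\mathbb{E}\{\|\widehat{\mathbf{H}}_l^l\mathbf{r}_{l,k}\|^2\}=\mathbb{E}\{[(\widehat{\mathbf{H}}_l^{l,H}\widehat{\mathbf{H}}_l^l)^{-1}]_{kk}\}=1/((M-K)\gamma_{l,k}^l)$, which is where the array gain $G=M-K$ enters. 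Carrying these through reproduces the numerator $G\rho_{l,k}\gamma_{l,k}^l$, the coherent denominator $G\sum_{i\neq l}\rho_{i,k}\gamma_{l,k}^i$, and the non-coherent sum with $z_{l,k}^i=\beta_{l,k}^i-\gamma_{l,k}^i$, completing the identification with the stated expression.
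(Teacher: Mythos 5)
Your proposal is correct: the hardening/use-and-then-forget bound, the MMSE orthogonal decomposition, the co-pilot parallelism $\hat{\mathbf{h}}_{l,k}^i=\sqrt{\gamma_{l,k}^i/\gamma_{i,k}^i}\,\hat{\mathbf{h}}_{i,k}^i$, and the Wishart/inverse-Wishart moments are exactly the ingredients needed, and your moment evaluations (e.g., $M\gamma_{l,k}^i+\beta_{l,k}^i$ for the MR coherent terms, $1/((M-K)\gamma_{l,k}^l)$ for the ZF normalization) all check out against the stated $G$ and $z_{l,k}^i$. Note that the paper itself offers no proof---it cites Corollary~3 of \cite{Chien2017a}---and your derivation is essentially the standard argument used in that reference, so this counts as the same approach.
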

		 As the closed-form expression of the ergodic SE in \eqref{eq:DLRate} is independent of the small-scale fading, it can be used to solve resource allocation problems whose solutions are applicable over a long time period \cite{Adhikary2017, zappone2015,Asgharimoghaddam2014}. In this paper, we introduce a distributed implementation for the total transmit power minimization problem.
		 \begin{figure}[t]
		 	\centering
		 	\includegraphics[trim=5.5cm 10.3cm 10.5cm 12.3cm, clip=true, width=3.5in]{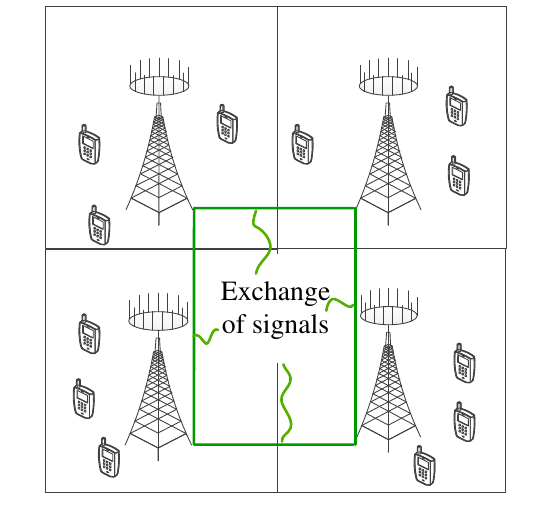} 
		 	\caption{A multi-cell Massive MIMO system where the problem~\eqref{eq:Opt1} is optimized in distributed manner.}
		 	\label{Fig1SystemModel}
		 \end{figure}
		\subsection{Total Transmit Power Minimization Problem}
		The total transmit power at BS $l$ is $ \sum_{k=1}^K \rho_{l,k}, \forall l$. Suppose user $k$ in cell $l$  has the QoS requirement $R_{l,k} \geq \xi_{l,k}$, where $R_{l,k}$ is given by \eqref{eq:DLRate}.
		The total transmit power minimization problem of the $L$ BSs subject to these QoS requirements is formulated as
				\begin{equation} \label{eq:General-Form-Optimization}
		\begin{aligned}
		& \underset{ \{\rho_{l,k} \geq 0 \} }{\textrm{minimize}}
		& & \sum_{ l=1 }^{L} \sum_{k=1}^K \rho_{l,k}\\
		& \textrm{subject to}
		& &  R_{l,k} \geq \xi_{l,k},\; \forall l,k ,\\
		& && \sum_{t=1}^K \rho_{l,t} \leq P_{\mathrm{max}, l }, \; \forall l,\\
		\end{aligned}
		\end{equation}
		where $P_{\max,l}$ is the maximum transmit power at BS $l$. Converting from the SE requirements to the corresponding SINR values, i.e., set $\hat{\xi}_{l,k} = 2^{\frac{\tau_c \xi_{l,k}}{\tau_c  - \tau_p}} -1 $, and then problem \eqref{eq:General-Form-Optimization} is reformulated as
		\begin{equation} \label{eq:Opt1}
		\begin{aligned}
		& \underset{ \{ \rho_{l,k} \geq 0 \}}{ \mathrm{minimize} }  && \quad \sum_{l=1}^L \sum_{k=1}^K \rho_{l,k} \\
		&  \text{subject to} && \quad \textrm{SINR}_{l,k}\geq \hat{\xi}_{l,k}, \forall l,k, \\
		&&&  \quad \sum_{t=1}^K \rho_{l,t} \leq P_{\max, l}, \forall l.
		\end{aligned}
		\end{equation}
		Since problem \eqref{eq:Opt1} is a linear program \cite{Marzetta2016a}, its optimal solution can be obtained in polynomial time by an interior-point algorithm or a simplex method, e.g., using CVX \cite{cvx2015}. A centralized implementation requires the $KL^2$ large-scale fading coefficients of all channels (i.e., $\beta_{i,k}^l \forall i,k,l$), the $KL^2$ variances of all the channel estimates (i.e., $\gamma_{i,k}^l, \forall i,k,l$), and the $KL$ QoS requirements of the users to be gathered at one location in the network, which can then solve \eqref{eq:Opt1} to obtain the optimal power control (i.e., $\rho_{l,k}, \forall l,k$). The optimal power solution then needs to be fed back to the BSs by sending $KL$ additional parameters over the backhaul.
		
		\subsection{Basic Form of Distributed Implementation}
		
		In a distributed implementation of problem~\eqref{eq:Opt1}, each BS performs data power allocation for its own users and only iteratively exchange signals with the other BSs as shown in Fig.~\ref{Fig1SystemModel}. A basic form of \eqref{eq:Opt1} is that every BS acquires the $2K(L-1)^2 + K(L-1)$ parameters, which the centralized implementation requires, and then solves \eqref{eq:Opt1} locally. This removes the need for sending the solution over the backhaul.

		\section{Distributed Implementation by Dual Decomposition}
		In this section, a distributed implementation of \eqref{eq:Opt1} is studied based on dual decomposition. Its convergence and the computational complexity are also investigated.
		\subsection{Assumptions for Distributed Implementation}
		
		\begin{figure}[t]
			\centering
			\includegraphics[trim=3.5cm 20.6cm 9cm 4.2cm, clip=true, width=3.5in]{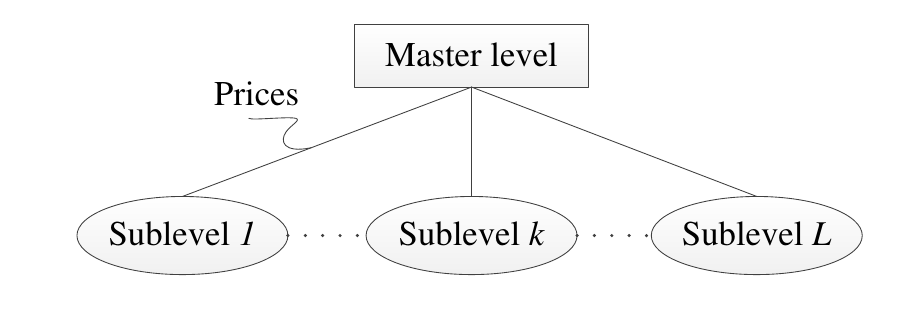} 
			\caption{Distributed implementation of problem~\eqref{eq:Opt1} based on the dual decomposition.}
			\label{Fig1Dic}
		\end{figure}
		The proposed distributed implementation for problem \eqref{eq:Opt1} is based on two levels of optimization: a master level and the $L$ sublevels are schematically illustrated in Fig.~\ref{Fig1Dic}. At Sublevel $l$, BS $l$ will locally optimize the transmit powers to its $K$ users utilizing only partial information. The following assumptions are made to allocate power for the DL transmission: 
		\begin{enumerate}
		
		\item BS~$l$ possesses statistical information including the large-scale fading coefficients $\beta_{l,k}^l$ and the channel estimate variance $\gamma_{i,k}^l$ of the $K$ local users. 
		\item BS~$l$ has the statistical information of the channels to the $K(L-1)$ interfering users, i.e., $\beta_{i,k}^l, \gamma_{i,k}^l, \forall k, i \neq l$. Those can also be measured locally.
		\item BS $l$ jointly optimizes the DL powers allocated to its $K$ local users based on the above prior information. 
		\item The inter-cell interference is considered as consistency parameters in a dual-decomposition approach. These are the only parameters needed to exchange between BSs and will be defined hereafter.
		
		\end{enumerate}
		
		\subsection{Details of the Distributed Implementation} \label{SubSection:DistributedTotalTransmitPower}
		Since the cost function in the optimization problem \eqref{eq:Opt1} is not strictly convex, a standard dual decomposition implementation is not guaranteed to converge \cite{Boyd2008b}.\footnote{A function $f$ is strictly convex if for two variables $x_1 \neq x_2$ in the feasible domain of $f$ and a scalar $\alpha \in (0,1)$ that gives $\alpha x_1 + (1- \alpha)x_2$ also in the feasible domain, then $f(\alpha x_1 + (1- \alpha)x_2) < \alpha f(x_1) +(1- \alpha)f(x_2).$ } Thus, in order to ensure the convergence of the distributed implementation, we will convert \eqref{eq:Opt1} into a convex problem which involves a strictly convex cost function by introducing a new variable $\tilde{\rho}_{l,k} \geq 0 $ such as $\rho_{l,k} = \tilde{\rho}_{l,k}^2, \forall l,k$. The effective SINR value of user $k$ in cell $l$ is reformulated as
		\begin{equation} \label{eq:ReformulatedSINR}
		\begin{split}
		 &\mathrm{SINR}_{l,k} = \frac{G \tilde{\rho}_{l,k}^2 \gamma_{l,k}^l }{ G \sum\limits_{\substack{i=1 \\ i \neq l}}^L  \tilde{\rho}_{i,k}^2 \gamma_{l,k}^i + \sum\limits_{i=1}^L \sum\limits_{t=1}^K \tilde{\rho}_{i,t}^2 z_{l,k}^i  + \sigma_{\mathrm{DL}}^2} \\
		  &= \frac{G \tilde{\rho}_{l,k}^2 \gamma_{l,k}^l }{ \sum\limits_{t=1}^K \tilde{\rho}_{l,t}^2 z_{l,k}^l + \sum\limits_{\substack{i=1 \\ i \neq l}}^L\left(G \tilde{\rho}_{i,k}^2 \gamma_{l,k}^i + \sum\limits_{t=1}^K \tilde{\rho}_{i,t}^2 z_{l,k}^i \right) + \sigma_{\mathrm{DL}}^2}.
		\end{split}
		\end{equation}
		In the last equation of \eqref{eq:ReformulatedSINR}, the first term in the denominator represents the mutual non-coherent interference of the $K$ local users served by BS~$l$. Hence, BS~$l$ can locally evaluate this term. The second term includes the coherent interference caused by the users utilizing nonorthogonal pilot signals and the non-coherent interference from all users in the other cells. If BS~$l$ wishes to evaluate the SE of each local user, it will need to obtain this information from the other $L-1$~BSs (i.e., the prices in Fig.~\ref{Fig1Dic}) to compute the second part. In order to reduce the amount of exchanged information among BSs, we introduce  so-called consistency variables $\theta_{l,k}^{i}$ and $\tilde{\theta}_{l,k}^i$ which represent the exact and believed value of $G \tilde{\rho}_{i,k}^2 \gamma_{l,k}^i + \sum_{t=1}^K \tilde{\rho}_{i,t}^2 z_{l,k}^i$, respectively. Thus, \eqref{eq:Opt1} is reformulated as
		\begin{subequations} \label{eq:Opt2}
			\begin{alignat}{2}
			&&& \underset{ \{ \tilde{\rho}_{l,k} \geq 0, \theta_{l,k}^i \geq 0, \tilde{\theta}_{l,k}^i \geq 0 \}}{ \mathrm{minimize} }   \quad \sum_{i=1}^L \sum_{t=1}^K \tilde{\rho}_{i,t}^2 \\
			&&&  \text{subject to}  \; \frac{G \tilde{\rho}_{l,k}^2 \gamma_{l,k}^l }{ \sum\limits_{t=1}^K \tilde{\rho}_{l,t}^2 z_{l,k}^l + \sum\limits_{\substack{i=1 \\ i \neq l}}^L (\tilde{\theta}_{l,k}^i)^2 + \sigma_{\mathrm{DL}}^2}  \geq \hat{\xi}_{l,k}, \forall l,k, \\
			&&&   G \tilde{\rho}_{i,k}^2 \gamma_{l,k}^i + \sum\limits_{t=1}^K \tilde{\rho}_{i,t}^2 z_{l,k}^i \leq (\theta_{l,k}^i)^2, \forall l,i,k, i \neq l,\\
			&&& \theta_{l,k}^i \leq \tilde{\theta}_{l,k}^i,\forall l,i,k, i \neq l, \label{eq:Opt2d}\\
			&&&  \sum_{k=1}^K \tilde{\rho}_{l,k}^2 \leq P_{\max, l}, \forall l.
			\end{alignat}
		\end{subequations}
	The constraints \eqref{eq:Opt2d} are called consistency constraints. There are in total $2L(L-1)K$ consistency variables $\theta_{l,k}^{i}$ and $\tilde{\theta}_{l,k}^{i}, \forall k, i,l$, thus \eqref{eq:Opt2} involves $LK(2LK-1)$ optimization variables. We stress that \eqref{eq:Opt1} and \eqref{eq:Opt2} are equivalent since $\theta_{l,k}^{i} = \tilde{\theta}_{l,k}^{i}$ at the optimum. The dual decomposition approach is used to decompose problem \eqref{eq:Opt2} into $L$ subproblems that each can be solved locally at a BS. To that end, a partial Lagrangian function, which is related to the differences between the consistency variables, is formed as
		\begin{equation} \label{eq:LagrangianOfTotalTransmitPower}
		\begin{split}
		&\mathcal{L}\left(\{\tilde{\rho}_{l,k}, \lambda_{l,k}^i \}\right)  = \sum_{l=1}^L \sum_{k=1}^K \left(\tilde{\rho}_{l,k}^2 + \sum_{\substack{i=1 \\ i \neq l}}^L \lambda_{l,k}^i ( \theta_{l,k}^i  - \tilde{\theta}_{l,k}^i)\right),
		\end{split}
		\end{equation}
		where $\lambda_{l,k}^{i} \geq 0$ is the Lagrange multiplier associated with the constraint $\theta_{l,k}^i \leq \tilde{\theta}_{l,k}^i$. The dual function of \eqref{eq:LagrangianOfTotalTransmitPower} is computed as the superposition of the $L$ local dual functions
		\begin{equation}
		g \left( \{ \lambda_{l,k}^i \} \right) = \sum_{l=1}^L g_l ( \{ \lambda_{l,k}^i \} ),
		\end{equation}
		where the local dual function of BS $l$, denoted by $g_l ( \{ \lambda_{l,k}^i \})$, is formulated as
		\begin{equation} \label{eq:LocalDualofTotalTransmitPower}
	\begin{split}		 
	&g_l ( \{ \lambda_{l,k}^i \}) = \\
	&\underset{ \substack{\{\tilde{\rho}_{l,k} \geq 0, \theta_{i,k}^l \geq 0, \tilde{\theta}_{l,k}^i \geq 0  \}}}{\mathrm{inf}} \; \sum_{k=1}^K \tilde{\rho}_{l,k}^2 + \sum_{k=1}^K \sum_{\substack{i=1\\ i\neq l }}^{L} \left(\lambda_{i,k}^l \theta_{i,k}^l - \lambda_{l,k}^i \tilde{\theta}_{l,k}^i \right).
	\end{split}
		\end{equation}
		From \eqref{eq:LocalDualofTotalTransmitPower}, problem \eqref{eq:Opt2} can be decomposed into the $L$ subproblems and the $l$th subproblem is
		\begin{equation} \label{eq:Subproblem}
		\begin{aligned}
		&&& \underset{ \substack{\{ \tilde{\rho}_{l,k} \geq 0, \theta_{i,k}^l \geq 0, \tilde{\theta}_{l,k}^i \geq 0 \}}}{ \mathrm{minimize} }  \sum_{k=1}^K \tilde{\rho}_{l,k}^2 + \sum_{k=1}^K \sum_{\substack{i=1\\ i\neq l }}^{L}(\lambda_{i,k}^l \theta_{i,k}^l - \lambda_{l,k}^i \widetilde{\theta}_{l,k}^i ) \\
		&&&  \text{subject to} \; \frac{G \tilde{\rho}_{l,k}^2 \gamma_{l,k}^l }{ \sum\limits_{t=1}^K \tilde{\rho}_{l,t}^2 z_{l,k}^l + \sum\limits_{\substack{i=1 \\ i \neq l}}^L (\tilde{\theta}_{l,k}^i)^2 + \sigma_{\mathrm{DL}}^2}  \geq \hat{\xi}_{l,k}, \forall k, \\
		&&&   G \tilde{\rho}_{l,k}^2 \gamma_{i,k}^l + \sum\limits_{t=1}^K \tilde{\rho}_{l,t}^2 z_{i,k}^l \leq (\theta_{i,k}^l)^2, \forall i,k, i \neq l, \\
		&&& \sum_{k=1}^K \tilde{\rho}_{l,k}^2 \leq P_{\max, l}.
		\end{aligned}
		\end{equation}
	Notice that BS $l$ uses $K(2L-1)$ optimization variables to solve the $l$th subproblem, while all the $L$ subproblems can be processed in parallel by the $L$ BSs for given values of all the Lagrange multipliers $\lambda_{l,k}^{i}$. The globally optimal solution to \eqref{eq:Subproblem} is obtained due to its convexity as stated in Theorem~\ref{Theorem:SOCP}.
		\begin{theorem} \label{Theorem:SOCP}
			The optimization problem \eqref{eq:Subproblem} is equivalent to the following second-order cone (SOC) program
			\begin{equation} \label{eq:SOCP}
			\begin{aligned}
			&&& \underset{ \substack{\{ \tilde{\rho}_{l,k} \geq 0\},s_l, \{ \theta_{i,k}^l \geq 0, \tilde{\theta}_{l,k}^i \geq 0 \}}}{ \mathrm{minimize} }  s_l \\
			&&&  \text{subject to} \;   \| \hat{\mathbf{u}}_{l,k} \| \leq  \tilde{\rho}_{l,k}\sqrt{\frac{G \gamma_{l,k}^l}{\hat{\xi}_{l,k}}}, \forall k, \\
			&&& \| \mathbf{u}_l \| \leq  \frac{1}{2} \left( 1 +s_l + \sum_{k=1}^K \sum_{i=1, i\neq l }^{L}(\lambda_{l,k}^i \tilde{\theta}_{l,k}^i - \lambda_{i,k}^l \theta_{i,k}^l) \right),\\
			&&&   \| \tilde{\mathbf{u}}_{i,k}^l \| \leq \theta_{i,k}^l, \forall i,k, i \neq l, \\
			&&& \| \breve{\mathbf{u}}_l \| \leq \sqrt{P_{\max, l}},
			\end{aligned}
			\end{equation}
			where $\hat{\mathbf{u}}_{l,k} \in \mathbb{C}^{K+L}, \mathbf{u}_l \in \mathbb{C}^{K+1}, \tilde{\mathbf{u}}_{i,k}^l \in \mathbb{C}^{K+1},$ and $\breve{\mathbf{u}}_l \in \mathbb{C}^K$ are respectively defined as
			\begin{align*}
			& \hat{\mathbf{u}}_{l,k} = \\
			&\left[ \sqrt{z_{l,k}^l}(\tilde{\rho}_{l,1}, \ldots, \tilde{\rho}_{l,K}), \tilde{\theta}_{l,k}^1, \ldots,\tilde{\theta}_{l,k}^{l-1},\tilde{\theta}_{l,k}^{l+1}, \ldots, \tilde{\theta}_{l,k}^{L}, \sigma_{\mathrm{DL}}  \right]^T,\\
			&\mathbf{u}_l = \\
			& \left[\tilde{\rho}_{l,1}, \ldots, \tilde{\rho}_{l,K}, \frac{1}{2} \left( 1 - s_l + \sum_{k=1}^K \sum_{\substack{i=1,\\ i\neq l}}^{L}(\lambda_{i,k}^l \theta_{i,k}^l -\lambda_{l,k}^i \tilde{\theta}_{l,k}^i )\right) \right]^T,  \\
			&\tilde{\mathbf{u}}_{i,k}^l =  \left[ \tilde{\rho}_{l,k} \sqrt{G \gamma_{i,k}^l}, \sqrt{z_{i,k}^l}(\tilde{\rho}_{l,1}  , \ldots,  \tilde{\rho}_{l,K}) \right]^T, \\
			&\breve{\mathbf{u}}_l = [\tilde{\rho}_{l,1}, \ldots, \tilde{\rho}_{l,K} ]^T.
			\end{align*}
		\end{theorem}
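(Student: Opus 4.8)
The plan is to prove the equivalence constraint-by-constraint, showing that each inequality in \eqref{eq:Subproblem} recasts into exactly one second-order cone (SOC) constraint in \eqref{eq:SOCP}, while the objective is handled by an epigraph reformulation. The single recurring principle is that, for $a\geq 0$, a constraint $\|\mathbf{x}\|\leq a$ is equivalent to the quadratic inequality $\|\mathbf{x}\|^2\leq a^2$; I would therefore verify in each case that the right-hand side is automatically non-negative, so that every transformation is a genuine equivalence rather than a one-sided implication.

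First I would dispatch the three ``direct'' constraints. For the QoS constraint I would clear the denominator and isolate the numerator, rewriting it as $\frac{G\tilde{\rho}_{l,k}^2\gamma_{l,k}^l}{\hat{\xi}_{l,k}} \geq z_{l,k}^l\sum_{t=1}^K\tilde{\rho}_{l,t}^2 + \sum_{i\neq l}(\tilde{\theta}_{l,k}^i)^2 + \sigma_{\mathrm{DL}}^2$; the right-hand side is precisely $\|\hat{\mathbf{u}}_{l,k}\|^2$ and the left-hand side is $\big(\tilde{\rho}_{l,k}\sqrt{G\gamma_{l,k}^l/\hat{\xi}_{l,k}}\big)^2$, which gives the first SOC constraint because $\tilde{\rho}_{l,k}\geq 0$ makes the bound non-negative. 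The interference constraint $G\tilde{\rho}_{l,k}^2\gamma_{i,k}^l + \sum_{t=1}^K\tilde{\rho}_{l,t}^2 z_{i,k}^l \leq (\theta_{i,k}^l)^2$ already has the form ``sum of squares $\leq$ a square'': its left-hand side equals $\|\tilde{\mathbf{u}}_{i,k}^l\|^2$, and since $\theta_{i,k}^l\geq 0$ it becomes $\|\tilde{\mathbf{u}}_{i,k}^l\|\leq\theta_{i,k}^l$. The power budget $\sum_{k=1}^K\tilde{\rho}_{l,k}^2\leq P_{\max,l}$ is immediately $\|\breve{\mathbf{u}}_l\|\leq\sqrt{P_{\max,l}}$.

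The step that needs the most care, and which I expect to be the only non-routine part, is the objective. I would introduce an epigraph variable $s_l$ and replace the minimization of $\sum_{k=1}^K\tilde{\rho}_{l,k}^2 + A$, where $A:=\sum_{k=1}^K\sum_{i\neq l}(\lambda_{i,k}^l\theta_{i,k}^l - \lambda_{l,k}^i\tilde{\theta}_{l,k}^i)$, by the minimization of $s_l$ subject to $\sum_{k=1}^K\tilde{\rho}_{l,k}^2 \leq s_l - A$. To convert this convex quadratic constraint into the SOC constraint on $\mathbf{u}_l$ I would invoke the hyperbolic/rotated-cone identity: setting $C:=s_l-A$, the inequality $\sum_{k=1}^K\tilde{\rho}_{l,k}^2\leq C$ is equivalent to $\sum_{k=1}^K\tilde{\rho}_{l,k}^2 + \big(\tfrac{1-C}{2}\big)^2 \leq \big(\tfrac{1+C}{2}\big)^2$, because the difference of the two squares equals $C$. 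Recognizing the left-hand side as $\|\mathbf{u}_l\|^2$ (its last coordinate being $\tfrac{1}{2}(1-s_l+A)=\tfrac{1-C}{2}$) and the right-hand side as $\big(\tfrac{1}{2}(1+s_l-A)\big)^2=\big(\tfrac{1+C}{2}\big)^2$ reproduces exactly the second SOC constraint.

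Finally I would close the argument by confirming non-negativity of each bound, which is what upgrades the squared identities to true equivalences. The only delicate case is the $\mathbf{u}_l$ constraint: the squared inequality forces $C=s_l-A\geq 0$ (since $\|\mathbf{u}_l\|^2\geq\sum_k\tilde{\rho}_{l,k}^2\geq 0$), hence $\tfrac{1}{2}(1+s_l-A)=\tfrac{1+C}{2}\geq\tfrac{1}{2}>0$, so no feasible point is discarded; the remaining bounds are non-negative because $\tilde{\rho}_{l,k}\geq 0$, $\theta_{i,k}^l\geq 0$, and $P_{\max,l}\geq 0$. Assembling these four equivalences establishes a correspondence between feasible points (and hence optimal solutions and optimal values) of \eqref{eq:Subproblem} and \eqref{eq:SOCP}, completing the proof.
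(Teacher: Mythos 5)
Your proposal is correct and follows essentially the same route as the paper: an epigraph variable $s_l$ for the objective, the difference-of-squares identity $s_l - A = \bigl(\tfrac{1+s_l-A}{2}\bigr)^2 - \bigl(\tfrac{1-s_l+A}{2}\bigr)^2$ to turn the epigraph constraint into the SOC constraint on $\mathbf{u}_l$, and direct norm-squared rewriting of the QoS, interference, and power constraints. Your explicit verification that each right-hand side is non-negative (so that squaring is a genuine equivalence) is a detail the paper glosses over, but it is the same argument.
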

		\begin{proof}
			Utilizing the epi-graph representation \cite{Boyd2004a}, the objective function of \eqref{eq:Subproblem} turns to the following constraint by introducing a new optimization variable $s_l$,
			\begin{equation} 
			\sum_{k=1}^K \tilde{\rho}_{l,k}^2 + \sum_{k=1}^K \sum_{i=1, i\neq l }^{L}(\lambda_{i,k}^l \theta_{i,k}^l - \lambda_{l,k}^i \tilde{\theta}_{l,k}^i ) \leq s_l,
			\end{equation}
			and it is equivalent to
			\begin{equation}\label{eq:ObjectConstraint} 
			\sum_{k=1}^K \tilde{\rho}_{l,k}^2  \leq s_l - \sum_{k=1}^K \sum_{i=1, i\neq l }^{L}(\lambda_{i,k}^l \theta_{i,k}^l - \lambda_{l,k}^i \tilde{\theta}_{l,k}^i ).
			\end{equation}
			By applying the identity $x - y = \frac{1}{4}(1+x-y)^2 - \frac{1}{4}(1 -x +y)^2$, i.e., with $x=s_l$ and $y=\sum_{k=1}^K \sum_{i=1, i\neq l }^{L}(\lambda_{i,k}^l \theta_{i,k}^l - \lambda_{l,k}^i \tilde{\theta}_{l,k}^i )$, for the right-hand side of \eqref{eq:ObjectConstraint}, this constraint can be converted to a SOC constraint. The other constraints of \eqref{eq:Subproblem} can be easily converted to the corresponding SOC constraints as in \eqref{eq:SOCP}.
		\end{proof}
		At iteration $n+1$, after obtaining the optimal solutions to the $L$ subproblems in \eqref{eq:SOCP}, every BS will update the Lagrange multipliers at the so-called master level by considering the following master dual problem:
		\begin{equation} \label{eq:MasterDualProb}
		\begin{aligned}
		& \underset{ \{ \lambda_{l,k}^{i,(n+1)} \geq 0 \}}{ \mathrm{maximize} }  && \sum_{l=1}^L \sum_{k=1}^K \sum_{\substack{i=1 \\ i \neq l}}^L \lambda_{l,k}^{i,(n+1)} \left( \theta_{l,k}^{i,\ast,(n)}  - \tilde{\theta}_{l,k}^{i,\ast,(n)} \right),
		\end{aligned}
		\end{equation}
		where $\theta_{l,k}^{i,\ast,(n)}$ and $\tilde{\theta}_{l,k}^{i,\ast,(n)}$ are the global optimums in the $n$th iteration to the $L$ subproblems in \eqref{eq:SOCP}. The subgradient projection method \cite{Palomar2006a} can be adopted to update the Lagrangian multipliers as
		\begin{equation} \label{eq:LagrangeMultiplierUpdate}
		\lambda_{l,k}^{i,(n+1)} = \left[\lambda_{l,k}^{i,(n)} - \varsigma ^{(n)} \left( \tilde{\theta}_{l,k}^{i,\ast, (n)} - \theta_{l,k}^{i,\ast, (n)}  \right) \right]_+, \forall l,i,k,
		\end{equation}
		where $\varsigma^{(n)}$ is a positive step-size at the $n$th iteration and $[\cdot]_+$ is the projection onto the nonnegative orthant. If the master problem is solved at an arbitrary BS, it acquires the $2K(L-1)^2$ consistency parameters from the $L-1$ remaining BSs. The $2K(L-1)$ updated Lagrange multiplier $\lambda_{i,k}^{l,(n)}$ and $\lambda_{l,k}^{i,(n)}, \forall i,l,k, i\neq L$ should be sent back to BS $l$ in every iteration. In total, the number of exchanged parameters for each iteration is $4K(L-1)^2$. The proposed distributed implementation of problem \eqref{eq:Opt1} is presented in Algorithm~\ref{Algorithm:DistributedTotalTransmitPower}.
		\begin{algorithm}[t]
			\caption{Distributed implementation for problem \eqref{eq:Opt1}} \label{Algorithm:DistributedTotalTransmitPower}
			\textbf{Input}: $P_{\max,l}, \forall l$;  $\xi_{l,k}, \forall l,k$;  $\beta_{l,k}^i, \forall, l,i,k$; Initial Lagrange multipliers $\lambda_{l,k}^{i,(1)}, \forall l,i,k$;  Set up $n=1$ and initialize $\varsigma^{(1)}$.
			\begin{itemize}
				\item[1.] \emph{Iteration} $n$:
				\begin{itemize}
					\item[1.1.] BS $l$ solves problem \eqref{eq:SOCP} utilizing $\lambda_{l,k}^{i, (n)}$ from the master level.
					\item[1.2.] BS $l$ store the currently optimal powers: $\rho_{l,k}^{\ast,(n)} = \tilde{\rho}_{l,k}^{\ast,(n)} , \forall l,k$.
					\item[1.2.] BS $l$ send the optimal values $\tilde{\theta}_{l,k}^{i,\ast,(n)}$ and $\theta_{l,k}^{i,\ast,(n)}, \forall l,i,k,$ to the master level
					\item[1.3.] Update $\lambda_{l,k}^{i, (n+1)}$ by using \eqref{eq:LagrangeMultiplierUpdate} and then send back to BS $l$.
				\end{itemize}
				\item[2.] If \textit{Stopping criterion is satisfied $\rightarrow$ Stop}. Otherwise, go to Step 3.
				\item[3.] Set $n = n+1$ and update $\varsigma^{(n+1)}$, then go to Step $1$.
			\end{itemize}
			\textbf{Output}: The optimal solutions: $\rho_{l,k}^{\ast} = \rho_{l,k}^{(n)} \forall l,k.$
		\end{algorithm}
		 Furthermore, the convergence property of Algorithm~\ref{Algorithm:DistributedTotalTransmitPower} is established in the following theorem.
		\begin{theorem} \label{Theorem:ConvergenceOfTotalTransmitPower}
		  If the step-size $\varsigma^{(n)}$ is small enough, the distributed implementation in Algorithm~\ref{Algorithm:DistributedTotalTransmitPower} yields the optimal solution to \eqref{eq:Opt1}.
		\end{theorem}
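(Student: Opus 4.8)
The plan is to recognize Algorithm~\ref{Algorithm:DistributedTotalTransmitPower} as a projected (sub)gradient ascent method applied to the dual of the reformulated problem \eqref{eq:Opt2}, and then to combine the standard convergence theory for such methods with strong duality. First I would establish that \eqref{eq:Opt2} is a convex program: by Theorem~\ref{Theorem:SOCP} every constraint admits an equivalent second-order cone representation, so the feasible set is convex, while the objective $\sum_{i,t}\tilde\rho_{i,t}^2$ is strictly convex in $\tilde\rho$. Assuming the QoS targets $\hat\xi_{l,k}$ are jointly feasible with the power budgets $P_{\max,l}$, Slater's condition holds because the consistency constraints \eqref{eq:Opt2d} can be satisfied with strict inequality, so strong duality applies to the partial Lagrangian \eqref{eq:LagrangianOfTotalTransmitPower}: the optimal value of \eqref{eq:Opt2} equals $\max_{\lambda\ge 0} g(\{\lambda_{l,k}^i\})$.

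Next I would analyze the dual function. Because only the consistency constraints are dualized, $g$ separates as $g=\sum_l g_l$, where each $g_l$ is the optimal value of the strictly convex subproblem \eqref{eq:Subproblem}/\eqref{eq:SOCP}. Strict convexity is the crucial feature introduced by the substitution $\rho_{l,k}=\tilde\rho_{l,k}^2$: for every fixed $\lambda$ it forces the optimal $\tilde\rho^{\ast}$ to be unique, and since the auxiliary constraints are active at optimality (the $\theta_{i,k}^l$ carry the nonnegative coefficient $\lambda_{i,k}^l$ and are driven to their lower bound, while the $\tilde\theta_{l,k}^i$ are pinned by the SINR constraint), the relevant consistency-variable values are determined by $\tilde\rho^{\ast}$. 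By Danskin's theorem the concave dual $g$ is then continuously differentiable with $\partial g/\partial\lambda_{l,k}^i=\theta_{l,k}^{i,\ast}-\tilde\theta_{l,k}^{i,\ast}$. Hence the master update \eqref{eq:LagrangeMultiplierUpdate} is exactly a projected gradient ascent step $\lambda^{(n+1)}=[\lambda^{(n)}+\varsigma^{(n)}\nabla g]_+$ on $g$ over the nonnegative orthant, and the linearized master problem \eqref{eq:MasterDualProb} has its maximizing direction aligned with this gradient.

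With differentiability in hand, convergence follows from the classical theory of projected gradient ascent on a concave objective: if $\varsigma^{(n)}$ is sufficiently small, whether a small constant bounded by the reciprocal of a Lipschitz constant of $\nabla g$, or a diminishing sequence with $\sum_n\varsigma^{(n)}=\infty$ and $\sum_n(\varsigma^{(n)})^2<\infty$, the iterates $\{\lambda^{(n)}\}$ converge to a dual optimum $\lambda^{\ast}$ and $g(\lambda^{(n)})\to g(\lambda^{\ast})$. By strong duality the associated primal iterates converge to the optimal solution of \eqref{eq:Opt2}; at the optimum the consistency constraints are tight, i.e. $\theta_{l,k}^{i,\ast}=\tilde\theta_{l,k}^{i,\ast}$, so the reconstructed powers $\rho_{l,k}^{\ast}=(\tilde\rho_{l,k}^{\ast})^2$ are feasible and optimal for \eqref{eq:Opt1}, invoking the equivalence between \eqref{eq:Opt1} and \eqref{eq:Opt2} noted after \eqref{eq:Opt2}.

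The main obstacle I anticipate is rigorously justifying the differentiability of $g$, since the whole argument rests on it; this is precisely the role of the squared reparameterization, which repairs the non-uniqueness inherent in the linear objective of \eqref{eq:Opt1}. One must show the subproblem minimizer exists, is unique, and varies continuously with $\lambda$: strict convexity secures uniqueness of $\tilde\rho^{\ast}$, the power constraint $\sum_k\tilde\rho_{l,k}^2\le P_{\max,l}$ gives the coercivity needed for attainment of the infimum in \eqref{eq:LocalDualofTotalTransmitPower}, and activeness of the auxiliary constraints transfers uniqueness to the consistency variables. Should uniqueness of the consistency variables fail for degenerate data, the argument can be salvaged by treating \eqref{eq:LagrangeMultiplierUpdate} as a genuine subgradient method, which still converges in value for small enough step-size. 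A secondary, more laborious point is making the admissible step-size explicit in terms of the problem data $\{\gamma_{l,k}^i, z_{l,k}^i, \hat\xi_{l,k}, P_{\max,l}\}$ via an explicit Lipschitz bound on $\nabla g$; this is not needed for the qualitative claim of the theorem.
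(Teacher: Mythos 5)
Your proposal is correct and takes essentially the same approach as the paper: the paper's own two-sentence proof rests on exactly the mechanism you develop, namely that the strict convexity introduced by the substitution $\rho_{l,k}=\tilde{\rho}_{l,k}^2$ makes the master-level updates \eqref{eq:LagrangeMultiplierUpdate} converge (for small enough step-size) to the dual optimum with $\theta_{l,k}^i - \tilde{\theta}_{l,k}^i \to 0$, at which point the equivalence of \eqref{eq:Opt2} and \eqref{eq:Opt1} gives the claim; the paper simply delegates the convergence machinery (differentiability of the dual, projected gradient/subgradient ascent, strong duality) to cited references rather than spelling it out. Your version fills in precisely those delegated details, including the degenerate case $\lambda_{l,k}^i = 0$ handled by falling back to a subgradient-method argument.
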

		\begin{proof}
		 The constraint functions of problem \eqref{eq:SOCP} are strictly convex, so the solution to the master dual problem \eqref{eq:MasterDualProb} converges to the global optium with $\theta_{l,k}^i -\tilde{\theta}_{l,k}^i \rightarrow 0, \forall l,i,k,$ \cite{Pennanen2011a, Bjornson2013d}. Consequently, the centralized and distributed implementation have the same optimal solution.
		\end{proof}
		\subsection{Complexity Analysis of \eqref{eq:SOCP}}
		Since problem \eqref{eq:SOCP} contains SOC constraints, a standard interior-point method (IPM) \cite{Boyd_convex,Ben-tal_13} can be used to find its optimal solution. The worst-case runtime of the IPM is computed as follows.
		
		{\it Definition 1:} For a given tolerance $\epsilon >0$, the set of $\tilde{\rho}_{l,k}^{\epsilon},s_l^{\epsilon},  \theta_{i,k}^{l,\epsilon}, \tilde{\theta}_{l,k}^{i,\epsilon}$ is called an $\epsilon$-solution to problem \eqref{eq:SOCP} if
		\begin{equation}
		 s_l^{\epsilon} \leq s_{l}^{\ast} + \epsilon,
		\end{equation}
		 where $s_{l}^{\ast}$ is the globally optimal solution to the optimization problem \eqref{eq:SOCP}.
		
		The number of decision variables of problem \eqref{eq:SOCP} is on the order of $m=\mathcal{O}\left( K(2L-1)+1\right)$ where $\mathcal{O}(\cdot)$ represents the big-O notation, thus we obtain Lemma~\ref{comlemma}.
		\begin{lemma}
			\label{comlemma}
			The computational complexity to obtain $\epsilon$-solution to problem \eqref{eq:SOCP} is
			\begin{eqnarray}
			\label{complexAI}
			\delta \left(LK^3+6LK^2+KL^2+6LK+ 3K + 5 + m^2\right) m,
			\end{eqnarray}
			where $\delta = \ln(\epsilon^{-1})\sqrt{2LK+4}$.
		\end{lemma}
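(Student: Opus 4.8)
The plan is to apply the standard worst-case complexity estimate for interior-point methods on second-order cone programs from \cite{Boyd_convex, Ben-tal_13}, which factorizes the total arithmetic cost into (i) the number of Newton iterations, governed by the aggregate barrier parameter, and (ii) the per-iteration cost of assembling and factorizing the Newton system. I would cast \eqref{eq:SOCP} in canonical conic form, noting that its number of decision variables is $m = K(2L-1)+1$ (the $K(2L-1)$ variables $\tilde{\rho}_{l,k}, \theta_{i,k}^l, \tilde{\theta}_{l,k}^i$ of the subproblem plus the epigraph variable $s_l$), and index its SOC constraints by $j$ with dimensions $k_j$. The cited bound then yields a total cost of order $\ln(\epsilon^{-1})\sqrt{\nu}\,(m\sum_j k_j^2 + m^3)$, where $\nu = \sum_j \nu_j$ is the total barrier parameter and each SOC contributes $\nu_j = 2$. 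It therefore suffices to instantiate $\nu$ and $\sum_j k_j^2$ for the specific cone structure of \eqref{eq:SOCP}.

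First I would enumerate the cones and record their dimensions. The $K$ constraints $\|\hat{\mathbf{u}}_{l,k}\| \leq \tilde{\rho}_{l,k}\sqrt{G\gamma_{l,k}^l/\hat{\xi}_{l,k}}$ each have dimension $K+L+1$ (the vector $\hat{\mathbf{u}}_{l,k}\in\mathbb{C}^{K+L}$ together with its scalar bound); the single constraint on $\mathbf{u}_l$ and the $K(L-1)$ constraints on $\tilde{\mathbf{u}}_{i,k}^l$ each have dimension $K+2$; and the power-budget constraint on $\breve{\mathbf{u}}_l$ has dimension $K+1$. This gives $KL+2$ cones in total, hence $\nu = 2(KL+2) = 2LK+4$, so the iteration factor $\ln(\epsilon^{-1})\sqrt{\nu}$ reproduces $\delta$ exactly.

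Next I would form the per-iteration cost. The arrowhead Hessian structure of the logarithmic SOC barriers makes the dominant contributions the $\mathcal{O}(m\sum_j k_j^2)$ cost of evaluating the barrier derivatives and assembling the $m\times m$ reduced Newton (normal-equation) matrix, plus the $\mathcal{O}(m^3)$ cost of its Cholesky factorization, so the per-iteration cost is $\mathcal{O}\big((\sum_j k_j^2 + m^2)m\big)$. Substituting the dimensions above, I would evaluate
\[
\sum_j k_j^2 = K(K+L+1)^2 + (K+2)^2 + K(L-1)(K+2)^2 + (K+1)^2,
\]
expand, and verify that the $K^3$ and $K^2$ terms cancel, leaving $LK^3 + 6LK^2 + KL^2 + 6LK + 3K + 5$. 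Multiplying the per-iteration cost by the iteration count $\delta$ then reproduces \eqref{complexAI}.

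The conceptual content is light, since this is a direct specialization of a textbook IPM complexity bound; the main obstacle is purely the bookkeeping. Concretely, I must count each cone's ambient dimension correctly, in particular deciding whether the scalar upper bound adds one to the length of $\hat{\mathbf{u}}_{l,k}, \mathbf{u}_l, \tilde{\mathbf{u}}_{i,k}^l,$ and $\breve{\mathbf{u}}_l$, and then carry out the expansion of $\sum_j k_j^2$ carefully enough that the higher-order $K$-terms cancel. Getting these counts exactly right is what makes \eqref{complexAI} emerge with the displayed coefficients rather than merely with the correct order of growth.
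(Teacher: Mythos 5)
Your proposal is correct and follows essentially the same route as the paper: enumerate the SOC constraints ($K$ cones of dimension $K+L+1$, $(L-1)K+1$ cones of dimension $K+2$, one cone of dimension $K+1$, giving barrier parameter $2LK+4$) and plug into the standard worst-case IPM bound, which the paper delegates to \cite[Section V-A]{Kun-Yu2014} rather than expanding. Your explicit expansion of $\sum_j k_j^2$ indeed yields $LK^3+6LK^2+KL^2+6LK+3K+5$, so the bookkeeping checks out exactly.
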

		\begin{proof}
			First, problem \eqref{eq:SOCP} has  $K$ SOC constraints of dimension $(K+L+1)$, $(L-1)K+1$ SOC constraints of dimension $(K+2)$, and $1$ SOC constraints of dimension $(K+1)$. Based on these observations, one can follow the same steps as in \cite[Section V-A]{Kun-Yu2014} to arrive at \eqref{complexAI}. Note that the term $\delta$ in \eqref{complexAI} is the order of the number of iteration required to reach $\epsilon$-solution to problem \eqref{eq:SOCP} while the remaining terms represent the per-iteration computation costs \cite{Kun-Yu2014, le2017robust}.
		\end{proof}
		
		\subsection{Numerical Results} \label{Sec:NumericalResults}
		
		Next, we study the convergence of the dual decomposition approach.
		We consider a wrapped-around cellular network with $L = 4, M=500, \tau_c =200, P_{\max,l}=40$ W, $\hat{p}_{l,k}= 200$ mW, and $\xi_{l,k} = 0.5$ b/s/Hz, $\forall l,k$. The $KL$ users are randomly distributed over the coverage area with uniform distribution. The distance of user $k$ in cell $l$ to BS $l$ is denoted as $d_{l,k}^l$ km and $d_{l,k}^l \geq 0.035$ km. The system bandwidth and noise figure are $20$ MHz
		and $-96$ dBm, respectively. The large-scale fading coefficient is modeled as
		\begin{equation}
		\beta_{l,k}^l [dB] = -148.1 - 37.6 \log_{10} (d_{l,k}^l) + z_{l,k}^l,
		\end{equation}
		where the shadow fading $z_{l,k}^l$ follows a log-normal Gaussian distribution with standard variation $7$ dB. The Lagrange multipliers are initially selected as $\lambda_{l,k}^{i,(1)}=0, \forall l,i,k$, while the step size in \eqref{eq:LagrangeMultiplierUpdate} is defined as $\varsigma^{(n)} =0.01$. ZF precoding is used for simulations.  Monte-Carlo results are obtained over $1000$ different realizations of user locations.
		
		Fig.~\ref{Fig2Dic} shows the probability of the number of iterations which Algorithm~\ref{Algorithm:DistributedTotalTransmitPower} needs to obtain the $95$\% of the global optimum for (a) $K= 1$ and (b) $K = 10$. Algorithm~\ref{Algorithm:DistributedTotalTransmitPower} converges to the desired objective value after a few iterations. For example, with $K=10$, $12.4\%$ of the realizations of user locations only need one iterations to attain $95$\% of the global optimum. If we set $400$ iterations as a maximum level to terminate the algorithm for the realizations of users which are in trouble to obtain $95$\% of the global optimum, there are $1.4\%$ of the user realizations facing with this issue. Meanwhile, by comparing Fig.~\ref{Fig2Dic}a and Fig.~\ref{Fig2Dic}b, we observe that Algorithm~\ref{Algorithm:DistributedTotalTransmitPower} requires more iterations to converge to the global optimum when increasing the number of users in the network.
		
		\begin{figure}[t]
			\centering
			\begin{minipage}{0.5\columnwidth}
				\centering
				\includegraphics[trim=6.6cm 9.2cm 6.5cm 9.6cm, clip=true, width=1.85in]{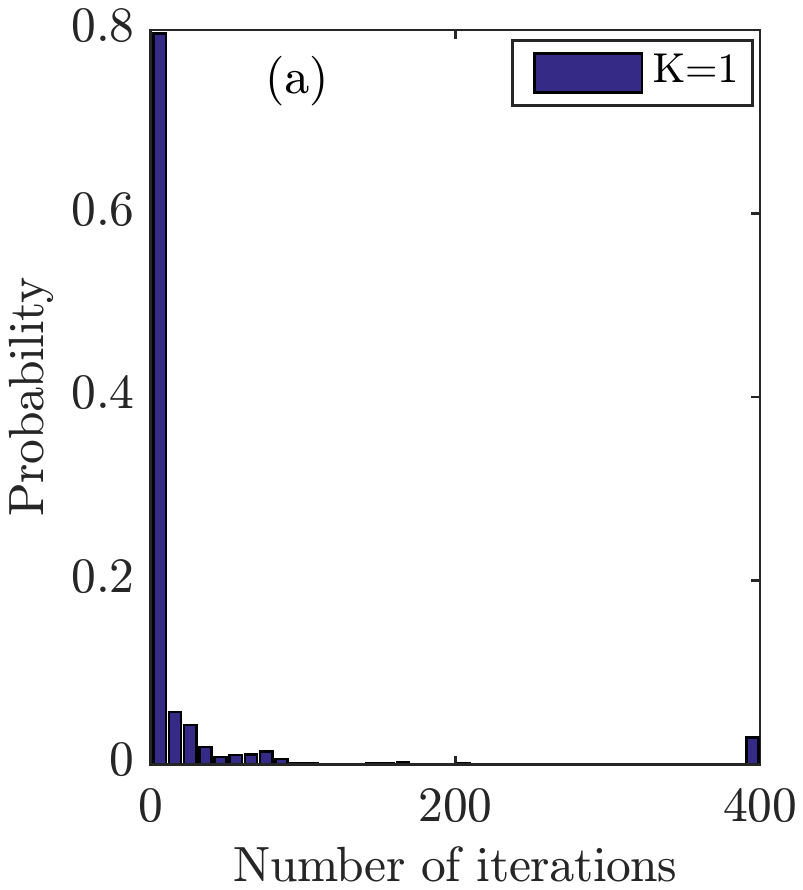} 
			\end{minipage}\hfill
			\begin{minipage}{0.5\columnwidth}
				\centering
				\includegraphics[trim=6.6cm 9.2cm 6.5cm 9.6cm, clip=true, width=1.85in]{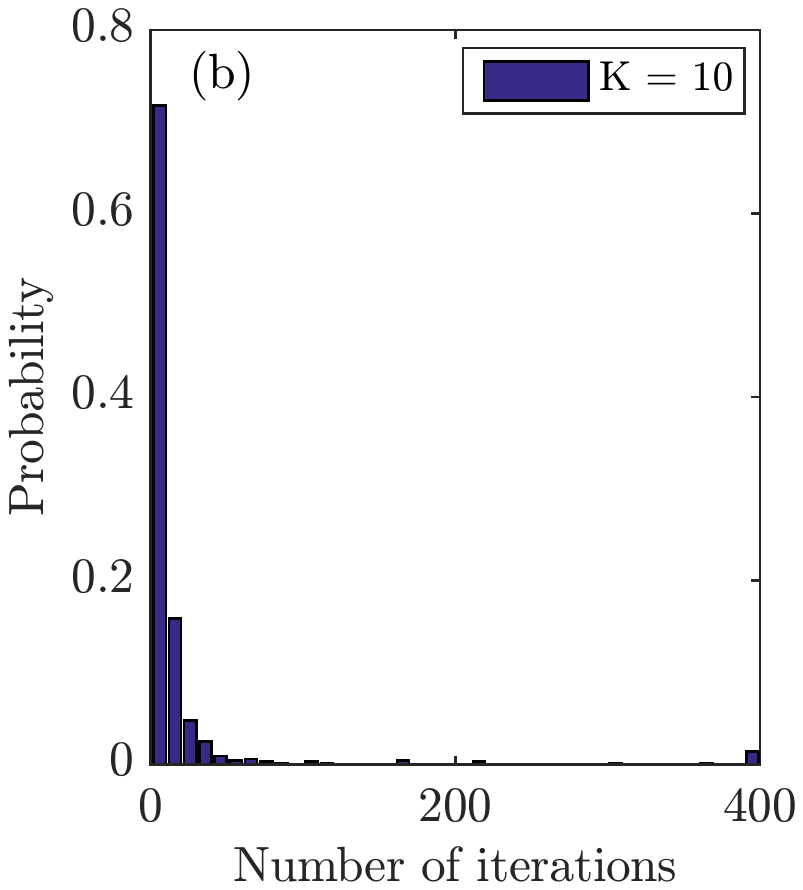} 
			\end{minipage}
			\caption{The distribution of the number of iterations required for Algorithm~\ref{Algorithm:DistributedTotalTransmitPower} to obtain $95$\% of the global optimum: $(a) K= 1$; $(b) K = 10$.}
			\label{Fig2Dic}
		\end{figure}
		\begin{figure}[t]
			\centering
			\includegraphics[trim=4cm 8.2cm 4.5cm 9.2cm, clip=true, width=3.5in]{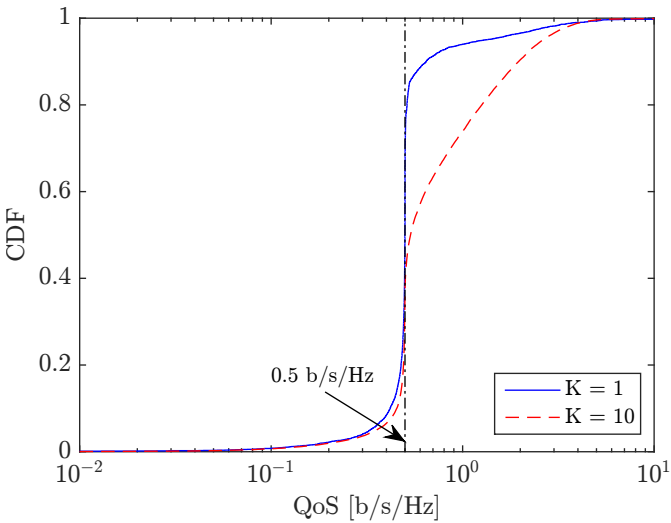} 
			\caption{Actual QoS [b/s/Hz] of each user at the $95$\% of the global optimum.}
			\label{FigDic}
		\end{figure}
		Fig.~\ref{FigDic} gives the cumulative distribution probability (CDF) of the actual QoS per user when the algorithm has reached $95$\% of the global optimum. The results confirm that the proposed distributed implementation satisfies the QoS requirements of most of the users, while some get somewhat lower QoS and some get higher. However, the performance gap between the proposed algorithm and its centralized counterpart is wider as the number of users increases, but mainly in the sense that users get higher QoS than required.

		\section{Comparison of Centralized and Distributed Implementation Approaches}
		\begin{table}[t]
			\centering
			\caption{The total number of optimization variables and exchanged variables for the considered approaches} 
			\begin{tabular}{|c|c|c|c|}
			\hline
			Method	&\thead{ Execution \\ Place} & \thead{Total Number of \\ Opt. variables} & \thead{Total Number of \\ Ex. Parameters}  \\
			\hline
			\thead{Centralized\\
			Imp.} & \thead{Core \\ Network}& $KL$ & $2KL^2 + 2KL$\\
			\hline
			\thead{Basic form  of\\
				Distributed Imp.} & \thead{Base \\ Stations} & $KL^2$ & \thead{$2K(L-1)^2 L$ \\ $+ K(L-1)L$}\\
			\hline
		\thead{Distributed Imp. \\ 
		by Dual Decomp.}& \thead{Base \\Stations} & \thead{$2KL^2$\\ $- KL + L$} & \thead{$4K(L-1)^2N$ \\ $+2K(L-1)L$}\\
	\hline
			\end{tabular}
			\label{Table1}
		\end{table}
		Table~\ref{Table1} summarizes the total number of the optimization variables and the exchanged parameters for the centralized and the two distributed implementations. In this table, $N$ represents the number of iterations used in the dual decomposition algorithm to reach convergence. Even though the distributed approach utilizing dual decomposition divides \eqref{eq:Opt1} into the $L$ subproblems where every BS locally optimizes the power control coefficients, it also introduces new optimization variables leading to a large number of optimization variables, both in total and when comparing each subproblem with the centralized problem. Note that the global problem is a linear program, while the subproblems are SOC programs, which are more complex to solve even when the number of optimization variables are equal.
		
The total number of exchanged parameters with the dual decomposition approach may be less than in the basic form of distributed implementation, because $N$ is usually small as shown in Section~\ref{Sec:NumericalResults}. However, in situations where the number of iterations needed for the dual decomposition approach increases, the number of exchanged parameters may surpass that of the two other implementations.

Unlike in small-scale networks, where the small-scale fading coefficient from every antenna to every user needs to be exchanged to solve the centralized problem, in Massive MIMO only a few statistical parameters need to be exchanged per user. Hence, in Massive MIMO, the distributed implementation by dual decomposition does not bring any substantial reductions in backhaul signaling compared to the centralized counterpart; this is basically a consequence of the channel hardening property \cite{Marzetta2016a}.

		\section{Conclusion}
		
This paper compared distributed and centralized implementations of the total transmit power minimization problem with QoS requirements in multi-cell Massive MIMO systems. In the distributed implementation based on dual decomposition, every BS can independently and simultaneously minimize its transmit powers, while controlling inter-cell interference based on to the Lagrange multipliers $\{\lambda_{l,k}^i\}$ provided by a central entity, while a subset of nominal parameters representing the strength of mutual interference $\{ \theta_{l,k}^i, \tilde{\theta}_{l,k}^i\}$ should be sent to the central entity to update these Lagrange multipliers. This iterative method converges to the global optimum after a few iterations (in the most realizations of the user locations) and its convergence is theoretically guaranteed. 
In small-scale MIMO networks, this approach can substantially reduce the computational complexity and backhaul signaling. However, in Massive MIMO systems, this distributed implementation does not bring any significant reductions in the amount of exchanged information, since each channel is only described by a few parameters. Therefore, for resource allocation problems that only involve large-scale fading coefficients, a centralized implementation is preferable in terms of both backhaul signaling and complexity.
In practice, a combination of the two approaches might be preferable, for example, where the resource allocation is not optimized on every BS or on a single central entity, but at multiple central entities that are responsible for larger coverage areas.

\bibliographystyle{IEEEtran}
\bibliography{IEEEabrv,refs}

\begin{thebibliography}{10}
\providecommand{\url}[1]{#1}
\csname url@samestyle\endcsname
\providecommand{\newblock}{\relax}
\providecommand{\bibinfo}[2]{#2}
\providecommand{\BIBentrySTDinterwordspacing}{\spaceskip=0pt\relax}
\providecommand{\BIBentryALTinterwordstretchfactor}{4}
\providecommand{\BIBentryALTinterwordspacing}{\spaceskip=\fontdimen2\font plus
\BIBentryALTinterwordstretchfactor\fontdimen3\font minus
  \fontdimen4\font\relax}
\providecommand{\BIBforeignlanguage}[2]{{%
\expandafter\ifx\csname l@#1\endcsname\relax
\typeout{** WARNING: IEEEtran.bst: No hyphenation pattern has been}%
\typeout{** loaded for the language `#1'. Using the pattern for}%
\typeout{** the default language instead.}%
\else
\language=\csname l@#1\endcsname
\fi
#2}}
\providecommand{\BIBdecl}{\relax}
\BIBdecl

\bibitem{Marzetta2016a}
T.~L. Marzetta, E.~G. Larsson, H.~Yang, and H.~Q. Ngo, \emph{Fundamentals of
  Massive MIMO}.\hskip 1em plus 0.5em minus 0.4em\relax Cambridge University
  Press, 2016.

\bibitem{Chien2016b}
T.~V. Chien, E.~Bj{\"o}rnson, and E.~G. Larsson, ``Joint power allocation and
  user association optimization for {M}assive {MIMO} systems,'' \emph{{IEEE}
  Trans. Wireless Commun.}, vol.~15, no.~9, pp. 6384 -- 6399, 2016.

\bibitem{Ngo2015a}
H.~Q. Ngo, A.~E. Ashikhmin, H.~Yang, E.~G. Larsson, and T.~L. Marzetta,
  ``Cell-free massive {MIMO}: Uniformly great service for everyone,'' in
  \emph{Proc.~IEEE SPAWC}, 2015.

\bibitem{Bjornson2013d}
E.~Bj{\"{o}}rnson and E.~Jorswieck, ``Optimal resource allocation in
  coordinated multi-cell systems,'' \emph{Foundations and Trends in
  Communications and Information Theory}, vol.~9, no. 2-3, pp. 113--381, 2013.

\bibitem{Palomar2006a}
D.~Palomar and M.~Chiang, ``A tutorial on decomposition methods for network
  utility maximization,'' \emph{{IEEE} J. Sel. Areas Commun.}, vol.~24, no.~8,
  pp. 1439--1451, 2006.

\bibitem{Adhikary2017}
A.~Adhikary, A.~Ashikhmin, and T.~L. Marzetta, ``Uplink interference reduction
  in large-scale antenna systems,'' \emph{{IEEE} Trans. Wireless Commun.},
  vol.~65, no.~5, pp. 2194--2206, 2017.

\bibitem{zappone2015}
A.~Zappone, L.~Sanguinetti, G.~Bacci, E.~Jorswieck, and M.~Debbah,
  ``Distributed energy-efficient {UL} power control in {M}assive {MIMO} with
  hardware impairments and imperfect {CSI},'' in \emph{Proc.~ IEEE ISWCS},
  2015, pp. 311--315.

\bibitem{Asgharimoghaddam2014}
H.~Asgharimoghaddam, A.~Tolli, and N.~Rajatheva, ``Decentralizing the optimal
  multi-cell beamforming via large system analysis,'' in \emph{Proc.~IEEE ICC},
  2014, pp. 5125--5130.

\bibitem{Yates1995a}
R.~Yates, ``A framework for uplink power control in cellular radio systems,''
  \emph{{IEEE} J. Sel. Areas Commun.}, vol.~13, no.~7, pp. 1341--1347, 1995.

\bibitem{le2013}
T.~A. Le and M.~R. Nakhai, ``Downlink optimization with interference pricing
  and statistical {CSI},'' \emph{{IEEE} Trans. Commun.}, vol.~61, no.~6, pp.
  2339--2349, 2013.

\bibitem{Kay1993a}
S.~Kay, \emph{Fundamentals of Statistical Signal Processing: Estimation
  Theory}.\hskip 1em plus 0.5em minus 0.4em\relax Prentice Hall, 1993.

\bibitem{Chien2018a}
T.~V. Chien, E.~Bj{\"o}rnson, and E.~G. Larsson, ``Joint pilot design and
  uplink power allocation in multi-cell {M}assive {MIMO} systems,''
  \emph{{IEEE} Trans. Wireless Commun.}, 2018, accepted for publication.

\bibitem{Chien2017a}
T.~V. Chien and E.~Bj{\"o}rnson, \emph{Massive MIMO Communications}.\hskip 1em
  plus 0.5em minus 0.4em\relax Springer International Publishing, 2017, pp.
  77--116.

\bibitem{cvx2015}
{CVX Research Inc.}, ``{CVX}: Matlab software for disciplined convex
  programming, academic users,'' \url{http://cvxr.com/cvx}, 2015.

\bibitem{Boyd2008b}
\BIBentryALTinterwordspacing
S.~Boyd, L.~Xiao, A.~Mutapcic, and J.~Mattingley, ``Primal and dual
  decomposition: Notes on decomposition methods.'' [Online]. Available:
  \url{https://stanford.edu/class/ee364b/lectures.html}
\BIBentrySTDinterwordspacing

\bibitem{Boyd2004a}
S.~Boyd and L.~Vandenberghe, \emph{Convex Optimization}.\hskip 1em plus 0.5em
  minus 0.4em\relax Cambridge University Press, 2004.

\bibitem{Pennanen2011a}
H.~Pennanen, A.~T\"{o}lli, and M.~Latva-aho, ``Decentralized coordinated
  downlink beamforming via primal decomposition,'' \emph{{IEEE} Signal Process.
  Lett.}, vol.~18, no.~11, pp. 647--650, 2011.

\bibitem{Boyd_convex}
S.~Boyd and L.~Vandenberghe, \emph{Convex Optimization}.\hskip 1em plus 0.5em
  minus 0.4em\relax Cambridge University Press, 2004.

\bibitem{Ben-tal_13}
A.~Ben-Tal and A.~Nemirovski, \emph{Lectures on Modern Convex
  Optimization}.\hskip 1em plus 0.5em minus 0.4em\relax Society for Industrial
  and Applied Mathematics, 2001.

\bibitem{Kun-Yu2014}
K.-Y. Wang, A.~M.-C. So, T.-H. Chang, W.-K. Ma, and C.-Y. Chi, ``Outage
  constrained robust transmit optimization for multiuser {MISO} downlinks:
  Tractable approximations by conic optimization,'' \emph{IEEE Trans. Signal
  Process.}, vol.~62, no.~21, pp. 5690--5705, Nov. 2014.

\bibitem{le2017robust}
T.~A. Le, Q.-T. Vien, H.~X. Nguyen, D.~W.~K. Ng, and R.~Schober, ``Robust
  chance-constrained optimization for power-efficient and secure {SWIPT}
  systems,'' \emph{IEEE Trans. Green Comm. and Networking}, vol.~1, no.~3, pp.
  333--346, 2017.

\end{thebibliography}
	\end{document}